\newtheorem{theorem}{Theorem}
\newtheorem{lemma}{Lemma}
\newtheorem{proposition}{Proposition}
\newtheorem{corollary}{Corollary}
\newtheorem{definition}{Definition}
\newtheorem{remark}{Remark}
\newcommand{\Rmnum}[1]{\expandafter\@slowromancap\romannumeral #1@}
\begin{document}
\title{\LARGE Power Grid Decomposition  Based on Vertex Cut Sets and Its Applications to Topology Control and Power Trading}
\author{Shuai Wang \& John Baillieul}
\maketitle
\let\thefootnote\relax\footnotetext{\noindent\underbar{\hspace{0.8in}}\\
Shuai Wang is  with the Division of Systems Engineering and John Baillieul is with the Departments of Mechanical Engineering; Electrical and Computer Engineering, and the Division of Systems Engineering at Boston University, Boston, MA 02115. Corresponding author is John Baillieul (Email: johnb@bu.edu). \newline The authors gratefully acknowledge support of NSF EFRI grant number 1038230. }
\begin{abstract}
It is well known that the reserves/redundancies
built into the transmission grid in order to address a variety
of contingencies over a long planning horizon may, in the
short run, cause economic dispatch inefficiency. Accordingly, power grid optimization by
means of short term line switching has been proposed and is
typically formulated as a mixed integer programming problem
by treating the state of the transmission lines as a binary decision variable, i.e. in-service
or out-of-service, in the optimal
power flow problem. To handle the combinatorial explosion, a
number of heuristic approaches to grid topology reconfiguration
have been proposed in the literature. This paper extends our
recent results on the iterative heuristics and proposes a fast
grid decomposition algorithm based on vertex cut sets with
the purpose of further reducing the computational cost. The paper concludes with a discussion of the possible relationship between vertex cut sets in transmission networks and  power trading.
\end{abstract}
 \section{Introduction}
Siting and maintaining massive power infrastructure is not cheap \cite{vajjhala} and therefore makes the optimal use of existing network a priority.
Because of the fast time constants in changing the system state and the very low costs, corrective  switching operations, including transmission line switching, bus-bar switching, and shunt element
switching, etc.,  are often the first post-contingency corrective action to be considered and implemented \cite{glavitsch}.

 The  focus of corrective  switching has been mainly on  handling line overload \cite{Mazi,Granelli} and voltage violations \cite{Bakirtzis,Rolim}, and most recently, on  co-optimizing the generation along with the network topology \cite{Fisher,Ruiz,Fuller}.
 Based on the DC optimal power flow (OPF) problem, such co-optimization is typically formulated as a mixed integer programming problem (MIP) with some binary variables denoting the choices of transmission line switches. There are a number of challenges in implementing effective transmission topology control, including understanding and avoiding voltage problems, transient instability, reactive power problems \cite{Hedman}  and, most importantly, the ability to find a good solution within a time that is short enough to be of practical use. 
 
Due to the combinatorial nature of the problem and the nonlinearities inherent to networks, the approaches proposed in the literature
focus on the reduction of the search space of possible switching actions \cite{Hedman2}. 
Accordingly, a set of fast line switching heuristics have been proposed, and the  running time records have been consistently set and then shattered  \cite{Fisher,Ruiz,Fuller}. In all of these studies, simulations show that significant savings are indeed possible from transmission switching. While optimality for this NP-hard problems
is ever elusive,  theorems together with simulation results  on a set of iterative
heuristics in \cite{preprint}, nevertheless, offer some justification that, with similar searching space, the locally optimal switching at each stage of an iterative heuristic has the best chance in reducing generation cost while enforcing connectivity.

The present paper examines possible grid decomposition approaches that are able to further reduce the search space of possible switching operations while at the same time retain the locally optimal switchings  to the maximum degree.   The paper is organized as follows. 
  In Section II, we first discuss the typical limitations with respect to the {\em composability} and {\em composationality} of
real transmission grids, showing the challenges in grid decomposition. 
A series  of trade-off is then proposed, and this forms the main thread of the paper. Section III proposes a decomposition method based on  {\em vertex cut sets} and {\em pseudo biconnected components}.  Rather than intuitively using {\em line outage distribution factor} to quantify the decomposition quality, Section IV makes a further trade-off by justifying the usage of {\em locational marginal price range}   to quantify the  potential of a pseudo biconnected component in reducing generation cost. 
  The useful properties of the vertex cut set in LMP partitioning and power/FTR trading are   discussed in great detail  in Section V. A fast grid decomposition algorithm  is then proposed in Section VI, and its efficiency is tested in Section VII. The paper ends with concluding remarks   in Section VIII.
 
\section{Composability and Compositionality}
When proposing heuristics to deal with the ``largest machine in the world", the transmission network, one natural idea is a divide-and-conquer strategy. We ask whether this large and complex
system can be usefully decomposed into
smaller, more tractable subsystems, thereby reducing the  search space of line  switching.
Here, we refer the ease of building a system  out of subsystems as $composability$, and the ease of validation of system properties by using the related subsystems  as $composationality$ \cite{Kammerer}. 

A transmission network can be modeled as a graph $G=(V,E)$ such each node in $V$  is either a substation bus, or simply a  bus connecting several  transmission lines. There is an edge $e_{i,j} = (v_i , v_j )\in E$ between two nodes if there is a physical line connecting directly the elements represented by $v_i$ and $v_j$. Note that the terms vertex/node/bus, link/line/edge and grid/graph/network will be used interchangeably.

In graph theory, a {\em biconnected component} in a graph is one where if any node is removed (with its boundary edges), the component will still be connected \cite{Chartrand}.  Reference \cite{Wang} proves that  the impact of topology control is strictly confined within the biconnected component  in which the line switching is applied in a transmission network. The decomposition of a transmission network based on biconnected components is thus ``clean" in the sense that  if the sizes of all its biconnected components are relatively small,  the network will have high level composability and composationality.  

Also proved in \cite{Wang}, the event occurs  with probability 1 that disconnecting a transmission line will change every line flow within the biconnected component that contained the disconnected line. The  N-1  reliability standard of today's transmission network, requiring the system to move  to a satisfactory state for any single line outage event,    significantly increases the likelihood of the whole network or a large portion of the network being biconnected.  Despite of the desirability of maintaining N-1 connectivity, the downside of   large biconnected components is a significantly decreased level of the network composability and composationality. 

While we should not neglect that every switchable line has the potential to alleviate the congestion in a biconnected component of a grid, simulation results in \cite{Ruiz} showed  that a fairly large portion of the switchable lines remained not operated in most samples, and thereby might well go unnoticed. Such observations suggest  the possibility that a biconnected component of a grid may  still be effectively  further decomposed  even if the interactions between  the sub-grids of the biconnected component are  not  zero.

Our work  thus aims to find an optimal trade off between the size of sub-grids of the biconnected component and the realized composability/composationality  by the grid decomposition.  The desired outcome should help us focus  attention on a relatively small set of promising switchable lines such that both the computational efforts and  the attainable savings can be reasonably controlled.

\section{Vertex Cut Sets \& Pseudo Biconnected Components}\label{vcspbc}
The  initially ``clean" decomposition of a transmission grid  involves a  graph theory term called the \emph{cut vertex} \cite{Chartrand} which is any single vertex/node whose removal increases the number of connected components of a graph and, by definition, is the only element shared by two or more biconnected components of  a graph.   

Apparently, further decomposing a biconnected component needs something beyond the cut vertex, and we propose the following:
\begin{definition} 
 A  \emph{vertex cut set} is a set of vertices  of a graph which, if removed together with any incident edges, disconnects the graph.  The sub-graphs of a biconnected graph obtained by one or several  vertex cut sets are called  \emph{pseudo biconnected components}. The vertices  in the vertex cut sets are the only elements shared by two or more pseudo biconnected components. In addition, the union of  several vertex cut sets for a given graph is also a vertex cut set.
\end{definition}

Using the concept of vertex cut sets, a two-step construction of pseudo biconnected components can be carried out. Given a graph and a vertex cut set, the disconnected  subgraphs remaining when the vertex cut set,  together with  any incident edges,  is removed are called the pre pseudo biconnected components. From these, the pseudo biconnected components are obtained by adjoining the vertex cut set to each pre pseudo biconnected component the entire vertex cut set along with the incident edges that have been removed. Fig. \ref{decompose} shows an example of the decomposition of a biconnected graph where the three red nodes in the middle  of graph form the vertex cut set. Note that the decomposition based on one vertex cut set may not be unique. For example, the horizontal black edge in Fig. \ref{decompose}(a) can be either assigned to the black-edge sub-graph (Fig. \ref{decompose}(c1)) or  to the blue-edge sub-graph (Fig. \ref{decompose}(c2)).  Thus we do have  some degree of discretion  in decomposing the grid. To reduce the search space of line switching, we will use the convention throughout this paper that, whenever possible, any incident edges of the vertex cut set subject to our discretion  will be assigned to a pseudo biconnected component that is free of congested lines.
\vspace{-0.15in}
\begin{figure}[htbp]
	\centering
		\includegraphics[width=0.5\textwidth]{./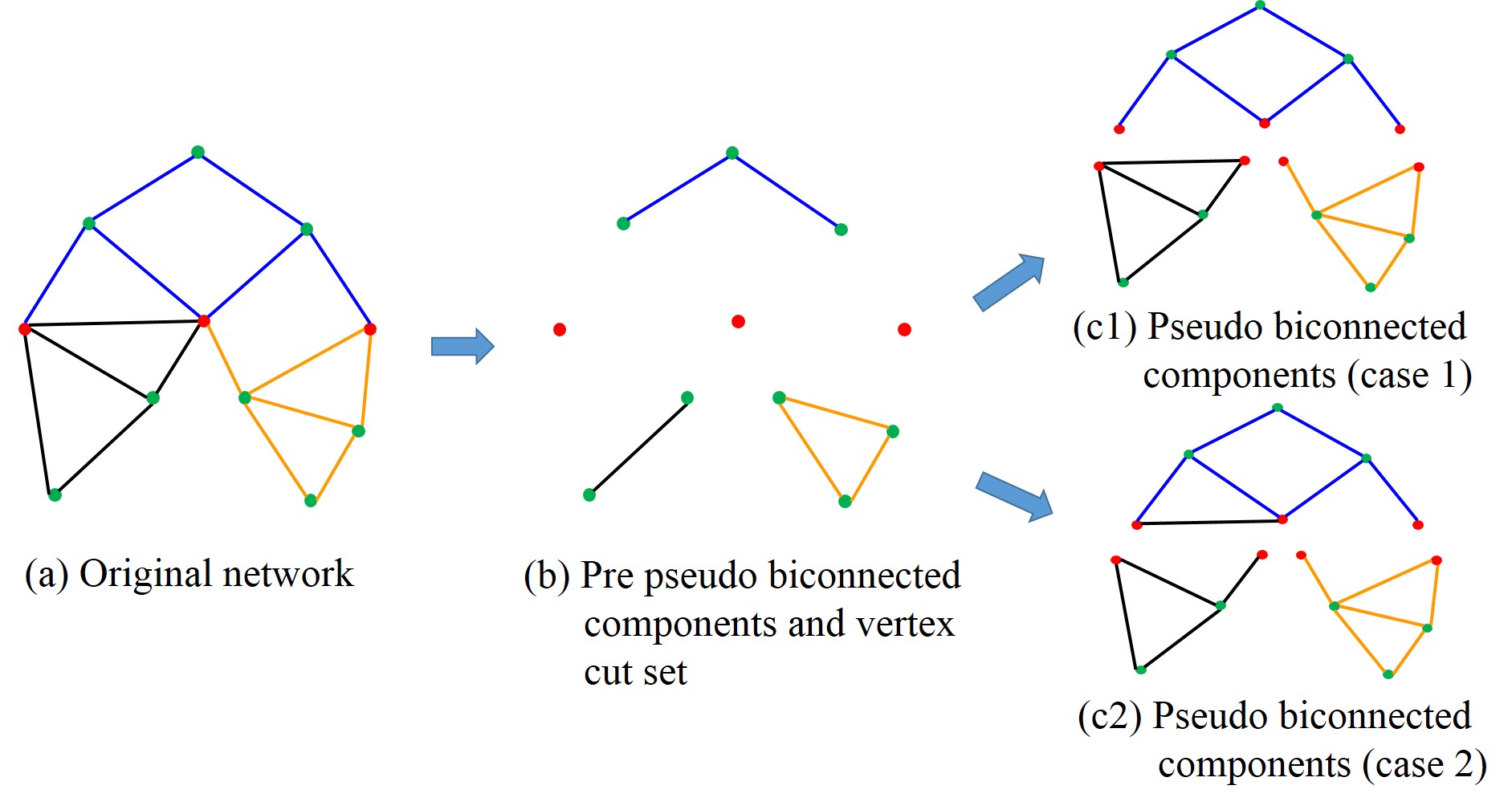}
	\caption{One example of the decomposition based on a vertex cut set formed by the three red nodes in the middle.} \label{decompose}
\end{figure}

Ideally, we would expect the decomposition in Fig \ref{decompose} to satisfy the following criteria:
\begin{itemize}
\item[(1)]  The interactions between any two lines belonging to two different pseudo biconnected components are  insignificant;
\item[(2)]  The size of the targeting pseudo biconnected component  is large enough to contain all promising switchings;
\item[(3)]  The size of  the targeting pseudo biconnected components is small enough to be   efficiently searched.
\end{itemize}

\section{Quantifying Decomposition Quality}
So far, the interactions mentioned in Criterion (1) at the end of Section \ref{vcspbc} are imprecisely specified, and hence we look for some metrics to quantify the effects of   line outages.

We introduce the well-known sensitivity, \emph{line outage distribution factor} or  LODF, denoted by $\zeta _i^{(j)}$,  of the line flow $i$  with respect to the removal of line $j$.  In short,  $\zeta _i^{(j)}$ denotes the percentage of  line flow on line  $j$ that will show up on line $i$ after the outage of line $j$.   Similarly, we can compute the LODF, denoted by $\zeta _j^{(i)}$,  of the power flowing through line $j$ with respect  to the removal of line $i$. The pair of LODFs, $\zeta _i^{(j)}$ and  $\zeta _j^{(i)}$,  are usually different. The interactions between the line pair $\{i,j\}$ are said to be  insignificant if   $\zeta _i^{(j)}$ and  $\zeta _j^{(i)}$ are both small. Similarly, the interactions between two pseudo biconnected components are said to be  insignificant if the pair of LODFs for any two lines belonging to  two different pseudo biconnected components are small.

To quantify the quality of  grid decomposition based on vertex cut set, we need to recalculate the  LODFs for each  possible line switching which could be  computationally intensive for large scale grids.

The computational overhead  pushes us to make another trade off between the  computational costs and the preciseness of the metrics measuring the interactions in a grid. Such trade off is achieved by introducing    another well-know sensitivity, \emph{power transfer distribution factor} or \emph{PTDF}, which gives the sensitivity of the flow on one line  with respect to a power transfer between a pair of buses.

The PTDF, by definition, is essentially the difference between a pair of \emph{shift factors}. 
In the DC power flow model, the \emph{transmission sensitivity matrix} $\Psi$, also known as the \emph{injection shift factor matrix}, gives the sensitivities in line flows due to changes in the nodal injections, with the reference bus assumed to ensure the power balance. Suppose $\Psi_{i,j_1}$ denotes the sensitivity in power flowing through  line $i$ due to one unit increase in the power injection at  bus $j_1$ and one unit decrease in power injection at the reference bus, i.e.
\[\Psi_{i,j_1}= \frac{\Delta flow \ in \ line \ i }{\Delta injection \ at \ bus \  j_1} ,\] then   the  shift factor difference $ |\Psi_{i,j_1}-\Psi_{i,j_2}| $ gives the PTDF of power flowing through line $i$ with respect to  one unit power transfer between  bus $j_1$ and    $j_2$. 

The network LODFs, mathematically, are equivalent to the post-outage network PTDFs. While the pre- and post-outage PTDFs are rarely equivalent, there usually exists a strong correlation between their values. Therefore, although the estimations of the decomposition quality based on pre-outage PTDFs maybe  different from those based on LODFs, such differences are reasonably expected to be small.

Here, we would like to remark that the introduction of PTDF  does   reduce the computation cost associated with the estimations of decomposition quality, but such computation effort is still relatively small compared with solving the updated DCOPFs. The value of the following PTDF analysis, in fact, is more towards a fundamental understanding of some structural properties of the grid decomposition, thus reducing the computational effort spent on finding the optimal vertex cut sets.

We now need to deal with the most challenging step  of the decomposition, i.e.  minimizing the interactions (PTDFs) between pseudo biconnected components. Though highly desired, such decomposition quality is usually not achievable by just using vertex cut sets as the cushion zones between pseudo biconnected components, due to the  unavoidable strong LODFs between those edges belonging to different pseudo biconnected components but incident to  same nodes in the vertex cut sets. Instead of  complicating the decomposition by expanding the cushion zones   to some sub-graphs containing both vertex cut sets and the associated incident edges, we  would like to stick to the vertex cut sets, indicating a necessity to modify the criteria.  Rather than struggling with the nominal PTDFs between pseudo biconnected components, our decomposition instead aims to minimize the  potentially ``useful" portion of those PTDFs, i.e. the portion associated with the congestion effect.
 
When talking about the congestion effect,  it is natural to wonder about the 
the $locational$ $marginal$ $price$ (LMP). In a lossless DC power flow model, the LMP at a given bus, say bus $i$, is equal to the sum of the LMP at the reference bus and the congestion cost which can be calculated using the following equation:
\begin{equation} \label{lmp_formula}
\lambda_i=\lambda_{ref}-\sum_j(\mu_j\times \Psi_{j,i})
\end{equation}
where 
\[
\lambda_i=LMP\ at\ bus\ i
\]
\vspace{-0.27in}
\[
\lambda_{ref}=LMP\ at\ the\ reference\ bus
\]\vspace{-0.27in}
\[
\mu_j=Shadow\ price\ of\ line\ constraint\ j
\]\vspace{-0.27in}
\[
\Psi_{j,i}= Shift\ factor\ for\ real\ load\ at\ bus \ i\ on\ line\ flow\ j.
\]

The sensitivity analysis of DCOPF model suggests that power transfer between  buses with close LMPs  usually can't change the generation cost as much as that between buses with significantly different LMPs, and similarly line switches in areas with low range LMPs are typically of little use in reducing overall generation costs. Therefore,  the goal of the partition problem based on vertex cuts  is  to decompose the grid into two   sub-grids with one    being of high range LMPs and the other being of  low range LMPs.

\section {The Optimality of Vertex Cut Sets} \label{Sec:FTR}

In this section, we will show a set of useful properties of vertex cut sets with respect to the decomposition based on LMPs. The discussion starts with the following definition.
\begin{definition}Suppose there is one line congestion  in a transmission network  and we are able to control certain number of nodal injections, then the smallest set (i.e. of smallest cardinality) of buses  that  is able to fully alleviate the given  line congestion by changing least amount of nodal power injections is defined  as the $optimal$ set of buses for that congested line.
\end{definition}

\begin{remark} The optimal  set of buses for a given congested line may not be unique. 
\end{remark}

We first show that for a given congestion in one pseudo biconnected component, the optimal set of buses in another pseudo biconnected component must be a subset of the vertex cut set.   The complete proof involves two lemmas, one theorem, and one proposition.

We start the proof by introducing the well established  correspondence between power grids  and  electric   circuits. 
 Mathematically, the DC   power flow model is  equivalent to a current driven network (an electric   circuit comprised purely by resistors and current sources) \cite{Wang}, where power injections are equivalent to  current sources; power flowing through lines is equivalent to current through edges, etc.  See Table 1. Thus part of the following discussion will be focusing on   resistance networks with nodal current injections controlled by a set of current sources. 

\begin{center}
\begin{tabular}{l@{\hskip 0.1in}c@{\hskip 0.1in}c@{\hskip 0.1in}c@{\hskip 0.1in}c}\hline
 network &  potential   &  injection  &  admittance  & equation \\ \hline 
 circuit & voltage $V$ & current $I$ & conductance $G$ & $I=GV$   \\  \hline
grid&  phase $\theta$  & power $P$ & susceptance $B$ &   $P=B\theta$ \\  \hline
\end{tabular}\end{center}\begin{center}
{Table 1: The equivalence between  current driven circuits and DC power flow models of transmission grids.}
\end{center} 

\begin{lemma} If we are allowed to alter all controllable nodal power injections by arbitrary amount and all uncongested lines have enough line capacity, then the cardinality of the optimal  set of buses for a given  congested line must be 2, i.e. the  optimal  set of buses is a bus pair.
\end{lemma}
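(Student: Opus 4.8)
\emph{Proof proposal.} The plan is to pin the cardinality from both sides: show it is at least $2$ using the power-balance constraint, and at most $2$ by exhibiting a single bus pair that by itself removes the congestion; the lemma then follows.

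First I would argue the lower bound. In the DC power flow model any admissible perturbation $\Delta P$ of the nodal injections must conserve power, $\sum_i \Delta P_i = 0$ --- equivalently, a change at one non-reference bus is automatically compensated at the reference bus. Hence a perturbation supported on a single bus is the zero perturbation and cannot alleviate anything, so at least two buses must be altered and the optimal set has cardinality $\ge 2$.

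Next I would establish that two buses already suffice, using the circuit equivalence of Table 1: regard the grid as a resistive network in which the congested line $c$ is an edge $e=(a,b)$ of resistance $r_e$. Driving current $I$ into $a$ and out of $b$, let $r_{\mathrm{rest}}$ be the effective resistance between $a$ and $b$ after deleting $e$; since the biconnected component stays connected when one edge is removed, $r_{\mathrm{rest}}\in(0,\infty)$, and the current splits so that the current through $e$ is $I\,r_{\mathrm{rest}}/(r_e+r_{\mathrm{rest}})$. Thus the PTDF of line $c$ with respect to the transfer between $a$ and $b$ lies strictly in $(0,1)$, hence is nonzero; if $a$ or $b$ happens not to be controllable, the probability-$1$ result of \cite{Wang} still supplies some controllable bus pair with nonzero PTDF on line $c$. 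Because arbitrary-size injection changes are allowed, I would then choose the transfer direction to decrease $|P_c|$ and, by the affine (hence monotone and continuous) dependence of $P_c$ on the transfer magnitude, choose the magnitude that drives $P_c$ exactly to its limit; the ample-capacity assumption on uncongested lines guarantees no new violation appears. Hence one bus pair fully alleviates the congestion and the optimal set has cardinality $\le 2$.

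Combining the two bounds gives cardinality exactly $2$; among the nonempty, closed family of $2$-bus solutions the least-injection-change one is attained, so the optimal set is a bus pair. The main obstacle I anticipate is the upper bound --- in particular certifying that a \emph{controllable} bus pair with nonzero influence on the congested line exists and that an arbitrarily large transfer across it, combined with the ample-capacity hypothesis, actually clears the congestion; the resistive-circuit current-splitting computation above is the technical heart of that step.
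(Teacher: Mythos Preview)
Your argument is correct and reaches the same conclusion, but the route differs from the paper's. The paper's proof is essentially a two-line computation: since shift factors are injection-independent, a transfer of magnitude $\Delta_{j_1\& j_2}=\Delta_i/|\Psi_{i,j_1}-\Psi_{i,j_2}|$ between any controllable pair $\{j_1,j_2\}$ with distinct shift factors clears an overload of $\Delta_i$ on line $i$, and the pair maximizing $|\Psi_{i,j_1}-\Psi_{i,j_2}|$ is therefore the optimal one. You instead argue structurally---bounding the cardinality from below by power balance, and from above by exhibiting a nonzero PTDF via the resistive current-divider computation on the congested line's endpoints---and then invoke finiteness of the candidate pairs to extract the least-injection solution. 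Your version is more explicit about the lower bound and about certifying that \emph{some} feasible pair exists (points the paper leaves implicit), and the circuit viewpoint meshes nicely with the Kron-reduction machinery used later; the paper's version, on the other hand, is shorter and delivers the explicit characterization of the optimal pair as the max-PTDF pair, which is precisely the form needed in the proof of Proposition~\ref{SFVCS}.
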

\begin{proof} Shift factors do not change when the injection/withdrawal amount increases for any set of buses. Suppose line $i$ is overloaded by $\Delta_i$ units of power and the power injections at bus $j_1$ and $j_2$ are controllable. In order to alleviate the congestion of line $i$, the power injections at bus $j_1$ and $j_2$ need to be changed by
\[\Delta_{j_1\& j_2}=\frac{\Delta_i}{|\Psi_{i,j_1}-\Psi_{i,j_2}|}.\] Thus the pair of controllable buses with the largest shift factor difference associated with the congested line  must be the optimal set of buses for that congested line.
\end{proof}

\begin{theorem}
(\cite{Dorfler}) Any n-terminal connected resistance network is electrically equivalent to a  circuit  whose graph is a complete graph consisting of n(n-1)/2 resistances. Such n-terminal equivalent graph  can be obtained by using Kron Reduction to  eliminate all interior (non-terminal) nodes.
\end{theorem}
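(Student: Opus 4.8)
The plan is to prove Theorem~2 constructively by \emph{Kron reduction}, i.e.\ by taking a Schur complement of the network conductance (Laplacian) matrix with respect to the interior nodes. Model the network by its Laplacian $L$ and split the node set into the terminal set $\alpha$ of cardinality $n$ and the interior set $\beta$, so that with $I=LV$ one has the block form
\[
L=\begin{pmatrix}L_{\alpha\alpha}&L_{\alpha\beta}\\ L_{\beta\alpha}&L_{\beta\beta}\end{pmatrix},\qquad
\begin{pmatrix}I_\alpha\\ I_\beta\end{pmatrix}=L\begin{pmatrix}V_\alpha\\ V_\beta\end{pmatrix}.
\]
No current is injected at interior nodes, so $I_\beta=0$; the lower block row then gives $V_\beta=-L_{\beta\beta}^{-1}L_{\beta\alpha}V_\alpha$, and substituting into the upper block row yields $I_\alpha=L_{\mathrm{red}}V_\alpha$ with $L_{\mathrm{red}}:=L_{\alpha\alpha}-L_{\alpha\beta}L_{\beta\beta}^{-1}L_{\beta\alpha}$. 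By construction $L_{\mathrm{red}}$ reproduces exactly the terminal current--voltage relation of the original network, so any circuit realizing $L_{\mathrm{red}}$ is electrically equivalent to it; it remains to show that $L_{\mathrm{red}}$ is the Laplacian of a resistance network on the complete graph $K_n$.

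First I would justify that $L_{\beta\beta}$ is invertible. Since the network is connected, $L$ is a symmetric positive-semidefinite Laplacian with kernel spanned by the all-ones vector $\mathbf 1$, and it is standard that every proper principal submatrix of such a matrix is positive definite; as there is at least one terminal, $\beta$ is a proper subset of the nodes and hence $L_{\beta\beta}\succ 0$. Equivalently, $-L_{\beta\beta}$ is a nonsingular M-matrix, which is the form I will use for the sign argument below.

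Next I would check that $L_{\mathrm{red}}$ has the three defining features of a resistance-network Laplacian: it is symmetric, it has zero row sums, and its off-diagonal entries are non-positive. Symmetry follows from $L_{\beta\alpha}=L_{\alpha\beta}^{\top}$. Zero row sums follow because $L\mathbf 1=0$ gives $L_{\beta\alpha}\mathbf 1=-L_{\beta\beta}\mathbf 1$, hence $L_{\beta\beta}^{-1}L_{\beta\alpha}\mathbf 1=-\mathbf 1$, so $L_{\mathrm{red}}\mathbf 1=L_{\alpha\alpha}\mathbf 1+L_{\alpha\beta}\mathbf 1=0$. For the sign pattern: because $-L_{\beta\beta}$ is a nonsingular M-matrix its inverse is entrywise nonnegative, so $L_{\beta\beta}^{-1}\le 0$ entrywise; together with the fact that the off-diagonal blocks $L_{\alpha\beta}$ and $L_{\beta\alpha}$ are entrywise $\le 0$, this forces $L_{\alpha\beta}L_{\beta\beta}^{-1}L_{\beta\alpha}\le 0$ entrywise, so subtracting it from $L_{\alpha\alpha}$ only makes the already non-positive off-diagonal entries more negative. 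Reading off $g_{ij}=-[L_{\mathrm{red}}]_{ij}\ge 0$ as the conductance of the branch between terminals $i$ and $j$ then realizes $L_{\mathrm{red}}$ on $K_n$, which has $\binom{n}{2}=n(n-1)/2$ branches (a branch being an open circuit precisely when $g_{ij}=0$, which is non-generic).

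The main obstacle I expect is precisely this last sign argument: showing that Kron reduction preserves the Laplacian sign structure is the only step that is not pure linear algebra and relies on monotonicity (M-matrix) properties of $L_{\beta\beta}$ rather than on symmetry and $L\mathbf 1=0$ alone. Everything else---invertibility, the elimination identity, row sums, and electrical equivalence---is routine. As the theorem statement notes, one may also perform the reduction one interior node at a time: eliminating a single interior node is a rank-one Schur complement whose effect is the classical star--mesh transformation and which again yields a valid network Laplacian, so induction on the number of interior nodes gives an alternative route that avoids inverting $L_{\beta\beta}$ in one shot.
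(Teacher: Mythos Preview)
The paper does not supply its own proof of this theorem: it is quoted verbatim from \cite{Dorfler} and used as a black box in the proofs of Lemma~\ref{lemmashift} and Proposition~\ref{SFVCS}. Your Schur-complement argument is exactly the construction in that reference, so in spirit you are reproducing the cited proof rather than deviating from anything the present paper does.

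That said, your sign bookkeeping in the key step is inverted. The principal submatrix $L_{\beta\beta}$ itself is the nonsingular M-matrix (it is a Z-matrix and positive definite), not $-L_{\beta\beta}$; consequently $L_{\beta\beta}^{-1}$ is entrywise \emph{nonnegative}, not nonpositive. With $L_{\alpha\beta},L_{\beta\alpha}\le 0$ entrywise and $L_{\beta\beta}^{-1}\ge 0$ entrywise one gets $L_{\alpha\beta}L_{\beta\beta}^{-1}L_{\beta\alpha}\ge 0$ entrywise, and subtracting this nonnegative matrix from $L_{\alpha\alpha}$ indeed drives the off-diagonal entries further below zero. As you wrote it, with $L_{\alpha\beta}L_{\beta\beta}^{-1}L_{\beta\alpha}\le 0$, subtraction would \emph{add} a nonnegative quantity to the off-diagonal entries and the conclusion would not follow. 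Once this slip is corrected the argument is complete; your alternative one-node-at-a-time (star--mesh) induction also works and sidesteps the M-matrix inverse sign issue entirely.
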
 

\begin{remark} The above theorem can be easily extended to any  disconnected graph by adding some trivial lines of infinite resistance to make the equivalent graph connected and complete. The current flowing through those trivial lines must always be zero no matter how   the nodal current injections are changed.
\end{remark}

\begin{lemma} Suppose line $i$ is connected to bus pair $\{i_1,i_2\}$ and line $j$ is connected to bus pair $\{j_1,j_2\}$   in a connected transmission network, and  the reference directions of the power flowing through line $i$ and line $j$  are defined  as from bus $i_2$ to bus $i_1$ and  from bus $j_2$ to bus $j_1$, respectively.   Then the shift factor differences $\Psi_{i,j_2}-\Psi_{i,j_1}$ and $\Psi_{j,i_2}-\Psi_{j,i_1}$ must be of the same sign.
\label{lemmashift}
\end{lemma}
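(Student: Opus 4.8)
My plan is to translate the statement into the equivalent resistive circuit of Table~1 and then recognize the two shift-factor differences as \emph{reciprocal} transfer quantities, so that the asserted sign agreement reduces to the classical reciprocity theorem for linear resistive networks. Let $B$ be the weighted (susceptance) Laplacian of the connected network $G$, let $B^{+}$ be its Moore--Penrose pseudoinverse (equivalently, ground the reference bus $r$ and invert the reduced Laplacian), and for a node $k$ let $\epsilon_k$ denote the $k$-th standard unit vector. By definition $\Psi_{i,k}$ is the flow induced on line $i$ by injecting one unit at bus $k$ and withdrawing it at $r$; since the reference term enters $\Psi_{i,j_2}$ and $\Psi_{i,j_1}$ identically, it cancels in the difference, and using $\theta=B^{+}p$ together with $f_i=b_i(\theta_{i_2}-\theta_{i_1})$, where $b_i>0$ is the susceptance of line $i$, we get
\[
\Psi_{i,j_2}-\Psi_{i,j_1}=b_i\,(\epsilon_{i_2}-\epsilon_{i_1})^{\top}B^{+}(\epsilon_{j_2}-\epsilon_{j_1}).
\]
The reference direction $i_2\to i_1$ is exactly what fixes the sign here, and the analogous identity holds for line $j$ with the roles of $i$ and $j$ interchanged.

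The key step is then immediate: $B$ is symmetric, hence so is $B^{+}$, so
\[
(\epsilon_{i_2}-\epsilon_{i_1})^{\top}B^{+}(\epsilon_{j_2}-\epsilon_{j_1})=(\epsilon_{j_2}-\epsilon_{j_1})^{\top}B^{+}(\epsilon_{i_2}-\epsilon_{i_1}),
\]
and therefore
\[
\frac{\Psi_{i,j_2}-\Psi_{i,j_1}}{b_i}=\frac{\Psi_{j,i_2}-\Psi_{j,i_1}}{b_j}.
\]
Since $b_i,b_j>0$, the two shift-factor differences are a positive multiple of one another, so they are simultaneously positive, simultaneously negative, or simultaneously zero; that is precisely the claim. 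In circuit terms this is just reciprocity: up to scaling by branch conductances, the current driven through branch $j$ by a source applied across branch $i$ equals the current driven through branch $i$ by the same source applied across branch $j$. (One could instead Kron-reduce $G$ to the complete graph on $\{i_1,i_2,j_1,j_2\}$ as in the Theorem above and compute explicitly, but that route is messier, since line $i$ must be disentangled from the resistor the reduction places in parallel with it.)

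What remains is only to confirm that the statement holds literally in the boundary configurations. If two of the four terminals coincide (lines $i$ and $j$ share a bus), the vectors $\epsilon_{i_2}-\epsilon_{i_1}$ and $\epsilon_{j_2}-\epsilon_{j_1}$ are still well defined and every step above is unchanged; and if $\Psi_{i,j_2}=\Psi_{i,j_1}$ (as happens, e.g., for a bridge or under a symmetry of $G$), the same identity forces $\Psi_{j,i_2}=\Psi_{j,i_1}$, which is consistent with reading ``the same sign'' inclusively. I do not anticipate a genuine mathematical obstacle; the one place demanding care is the sign bookkeeping --- threading each line's chosen reference direction and the inject-minus-withdraw convention of the PTDF through every equality so that the final proportionality constant comes out \emph{positive} rather than negative.
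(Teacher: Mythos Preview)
Your argument is correct and considerably cleaner than the paper's. You express each PTDF as a bilinear form in $B^{+}$ and invoke the symmetry of $B^{+}$ to obtain the exact proportionality $(\Psi_{i,j_2}-\Psi_{i,j_1})/b_i=(\Psi_{j,i_2}-\Psi_{j,i_1})/b_j$, from which the sign agreement is immediate; this is precisely the reciprocity theorem for linear resistive networks. The paper instead carries out the Kron reduction you mention parenthetically: it eliminates all buses except $\{i_1,i_2,j_1,j_2\}$, obtains a four-terminal equivalent with four ``black'' cross-resistances $\alpha,\beta,\gamma,\delta$ joining the $i$-terminals to the $j$-terminals, and then argues by an explicit current-divider inequality ($\delta/\gamma$ versus $\beta/\alpha$) that the flow direction on the reduced $j$-edge under an $(i_2,i_1)$ transfer matches the flow direction on the reduced $i$-edge under a $(j_2,j_1)$ transfer. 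Your route is shorter, handles the degenerate cases (shared endpoints, zero PTDF, bridges) uniformly, and actually yields the stronger quantitative identity rather than a bare sign match; the paper's route is more pictorial, dovetails with the Kron-reduction theorem it has just quoted, and sets up the same reduction machinery used in the subsequent proposition, but it leaves the key inequality unproved and does not address the boundary configurations you flag.
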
 
\begin{proof}By keeping applying Kron reduction \cite{Dorfler} to the original transmission network (shown in the right side of Fig. \ref{demo2}) until all nodes but  $\{i_1,i_2,j_1,j_2\}$  are eliminated, we can create a 4-terminal equivalent grid with only four buses $\{i_1,i_2,j_1,j_2\}$ which is shown in the left side of Fig. \ref{demo2}. Note that the two red lines in Fig. \ref{demo2} may be of different susceptance  but the direction  of power flowing the lines, $P_j$ and $P_{j^{'}}$, must be the same. The same is true of the two blue lines. Thus we   focus our attention on the 4-terminal equivalent grid.

\begin{figure}[htbp]
	\centering
		\includegraphics[width=0.4\textwidth]{./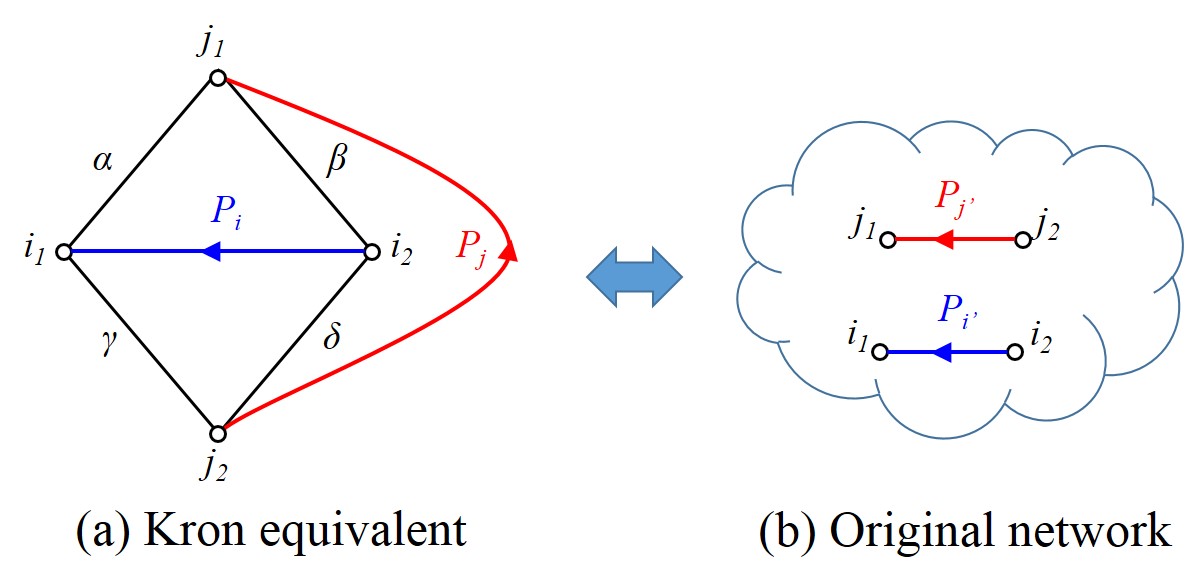}
	\caption{The 4-terminal equivalent network of the original network.} \label{demo2}
\end{figure}

By the equivalence in Table 1, we  can actually  treat the 4-terminal equivalent grid as a 4-terminal  resistance network and discuss the problem in a circuit environment. Under the circuit environment, we denote the  line resistances of the four black lines in the left side of Fig. \ref{demo2} as $\alpha$, $\beta$, $\gamma$ and $\delta$, respectively, and suppose there is no current injections and withdrawals in the network except   one unit  nodal injection at node $i_2$ and one unit   nodal withdrawal at node $i_1$. If the  direction of red line flow in the left figure  is from node $j_2$ to node $j_1$, we know $\Psi_{j,i_2}-\Psi_{j,i_1}>0$, indicating 
\[\frac{\delta}{\gamma}<\frac{\beta}{\alpha},\] that we must have $\Psi_{i,j_2}-\Psi_{i,j_1}>0$. If the  direction of red line flow in the left figure  is from node $j_1$ to node $j_2$, we know  $\Psi_{j,i_2}-\Psi_{j,i_1}<0$, indicating \[\frac{\delta}{\gamma}>\frac{\beta}{\alpha},\] that we must have $\Psi_{i,j_2}-\Psi_{i,j_1}<0$.
\end{proof}

\begin{proposition}\label{SFVCS}
Suppose we use a  vertex cut set to decompose a connected transmission network into $n$ ($n\geq 2$)  sub-graphs. Assume that we can arbitrarily change the power injection at all buses in the $j$-th sub-graph to alleviate a given congested  line  in the $i$-th sub-graph ($i\neq j$),  then one optimal pair of buses must be a subset of the vertex cut set.
\end{proposition}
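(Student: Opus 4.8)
The plan is to combine Lemma~1 with a reciprocity identity for shift factors and a discrete maximum principle, working throughout in the equivalent resistive circuit of Table~1. Write $S$ for the vertex cut set and $G_j$ for the $j$-th pseudo biconnected component, so that $S\subseteq V(G_j)$ and, by the defining property of a vertex cut set, every bus of the interior $V(G_j)\setminus S$ has all of its neighbours inside $V(G_j)$; let the congested line $i$ of the $i$-th component have positive susceptance $\beta_i$ and end buses $i_1,i_2$, its reference flow direction being from $i_2$ to $i_1$. Since the controllable buses are exactly $V(G_j)$, Lemma~1 (whose hypotheses are in force) says the optimal set is the pair $\{u,u'\}\subseteq V(G_j)$ maximizing $|\Psi_{i,u}-\Psi_{i,u'}|$; equivalently it consists of one bus attaining $\max_{v\in V(G_j)}\Psi_{i,v}$ and one attaining $\min_{v\in V(G_j)}\Psi_{i,v}$. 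So it suffices to show that both of these extrema are attained at buses of $S$.

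The first step is to record the reciprocity identity $\Psi_{i,v}=\beta_i\,\phi_v$, valid for every bus $v$, where $\phi$ is the phase profile produced in the whole network by injecting one unit of power at $i_2$ and withdrawing one unit at $i_1$, the reference bus serving only as the common zero of potential. This comes from symmetry of the reference-reduced nodal susceptance matrix, hence of its inverse $Z$: the unit injection at $v$ balanced at the reference gives phases $\theta^{(v)}_a=Z_{a,v}$, so $\Psi_{i,v}=\beta_i(\theta^{(v)}_{i_2}-\theta^{(v)}_{i_1})=\beta_i(Z_{i_2,v}-Z_{i_1,v})$, whereas injecting at $i_2$ and withdrawing at $i_1$ gives $\phi_v=Z_{v,i_2}-Z_{v,i_1}$, and these agree since $Z=Z^{\top}$. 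The gain from this rewriting is that the only non-zero injections in the definition of $\phi$ sit at $i_1,i_2$, which lie in component $i\neq j$ and therefore are not interior buses of $G_j$.

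Next I would invoke the maximum principle. At any bus carrying zero net injection the DC nodal balance equation --- Kirchhoff's current law in the resistive circuit --- expresses its phase as a susceptance-weighted convex combination of the phases at its neighbours, so that bus is never a strict maximum or minimum of $\phi$ among itself and its neighbours. Restricted to $V(G_j)$, every interior bus carries zero injection in the $\phi$-scenario and has all of its neighbours in $V(G_j)$, so $\phi|_{V(G_j)}$ is harmonic on $V(G_j)\setminus S$ relative to the connected graph $G_j$. A routine propagation argument then shows that a maximum (respectively minimum) of $\phi$ over $V(G_j)$ occurring at an interior bus would be shared by all of that bus's neighbours and, $G_j$ being connected, by all of $V(G_j)$, in particular by some bus of $S$; hence $\max_{V(G_j)}\phi$ and $\min_{V(G_j)}\phi$ are attained on $S$. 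Multiplying by $\beta_i>0$ and using $\Psi_{i,v}=\beta_i\phi_v$ turns this into $\max_{v\in V(G_j)}\Psi_{i,v}=\max_{s\in S}\Psi_{i,s}$ and likewise for the minimum, so the pair produced by Lemma~1 may be taken inside $S$. Since only $G_j$ and the component holding line $i$ enter the argument, the case $n>2$ needs nothing further.

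The step I expect to be the main obstacle is a clean derivation of $\Psi_{i,v}=\beta_i\phi_v$: getting the reference-bus bookkeeping right and checking that the identity, and with it the sandwiching, persists in the degenerate configurations --- an end bus of line $i$ being a cut vertex, or the reference bus coinciding with $i_1$ or $i_2$; if a uniform setting is convenient, the remark after Theorem~1 lets one first reduce to a connected complete-graph model. Once $\phi$ is known to be injection-free on the interior of $G_j$, the maximum-principle step is elementary, its only hypotheses being that $G_j$ is connected and that an interior bus of $G_j$ has all its neighbours in $G_j$ --- which is precisely what makes $S$ a vertex cut set.
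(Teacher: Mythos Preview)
Your proof is correct and takes a genuinely different route from the paper's. The paper first Kron-reduces the remote sub-graph $G_j$ to a complete graph on $\{n_1,\dots,n_M,p\}$ (Theorem~1), then invokes Lemma~2 to match the sign of each reduced line flow $I_{p,n_m}$ with the sign of $\Psi_{i,p}-\Psi_{i,n_m}$, and finally applies Kirchhoff's current law at $p$ to conclude that $\sum_m c_m(\Psi_{i,p}-\Psi_{i,n_m})=0$ with positive weights $c_m$, so the differences cannot all share a sign. You bypass both Kron reduction and Lemma~2: the reciprocity identity $\Psi_{i,v}=\beta_i\phi_v$ follows in one line from the symmetry of $Z=B^{-1}$ and delivers an exact equality where the paper extracts only sign information through a four-terminal Wheatstone-type argument, and the discrete maximum principle does directly on the original network what the paper achieves after reducing to $K_{M+1}$. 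Your argument is more elementary and self-contained; the paper's approach has the compensating advantage that its intermediate results (Lemma~2 and the Kron-reduction machinery) are reused elsewhere in Section~\ref{Sec:FTR}. One minor point: you assert $G_j$ is connected, which the paper's construction does not quite guarantee (cf.\ the parenthetical ``which may be unconnected'' in Corollary~\ref{corr_lmp}); your propagation argument still goes through, since the connected component in $G_j$ of any interior bus must meet $S$ by connectedness of the ambient network, so the extremum is attained on $S$ regardless.
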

\begin{proof}
Although the value of each entry in the injection shift  factor matrix varies as   the reference bus changes,  the PTDF or the difference between a pair of shift factors  is independent of  the  choice of the reference bus.  Suppose line $i$ is congested and is connected to bus $i_1$ and $i_2$, and we pick up $i_1$ as the reference bus and assuming the direction  from $i_2$ to $i_1$ as the positive power direction, we can get an injection shift factor matrix with all $i$-th row entries being non-negative. 

Suppose there are $M$ buses $\{n_1,...,n_M\}$ that are shared by   the $j$-th sub-graph and the rest of the graph. Clearly, $\{n_1,...,n_M\}$  is a subset of the vertex cut set.  Then in order to prove the proposition, we just need to show that  the shift factor differences $\{\Psi_{i,p}-\Psi_{i,n_1},..., \Psi_{i,p}-\Psi_{i,n_M}\}$ cannot be all positive or all negative for any bus $p$ that is owned exclusively by the $j$-th sub-graph, i.e. 
\[
min\{\Psi_{i,n_1}, ...,\Psi_{i,n_M} \}\leq\Psi_{i,p}\leq max\{\Psi_{i,n_1}, ...,\Psi_{i,n_M} \}.\] 

We first  create two sub-networks to simplify the proof.  The first sub-network  (shown at the bottom middle of Fig. \ref{demo}) is created from the original network by removing all lines in the $j$-th sub-graph, and it is called the nearby sub-network. The second sub-network (shown at the top middle of Fig. \ref{demo}) is just the $j$-th sub-graph,  and it is called the remote sub-network. It is easy to see that the two sub-networks share nothing except  the buses  $\{n_1,...,n_M\}$. Buses $\{i_1,i_2\}$ are interior buses of the nearby sub-network, and bus $p$ is interior bus of the remote sub-network.  

Then an equivalent sub-network of the remote sub-network can be created by applying Kron reduction to the remote sub-network until all nodes but  $\{n_1,...,n_M,p\}$ are removed. The equivalent sub-network obtained in this way is a $K_{M+1}$ complete graph as shown at the top right corner of Fig. \ref{demo}. 
An equivalent network of the original network can then be created by putting together the nearby sub-network and the $K_{M+1}$ equivalent sub-network. The process described above is visualized in Fig. \ref{demo}. 
\begin{figure}[htbp]
	\centering
		\includegraphics[width=0.6\textwidth]{./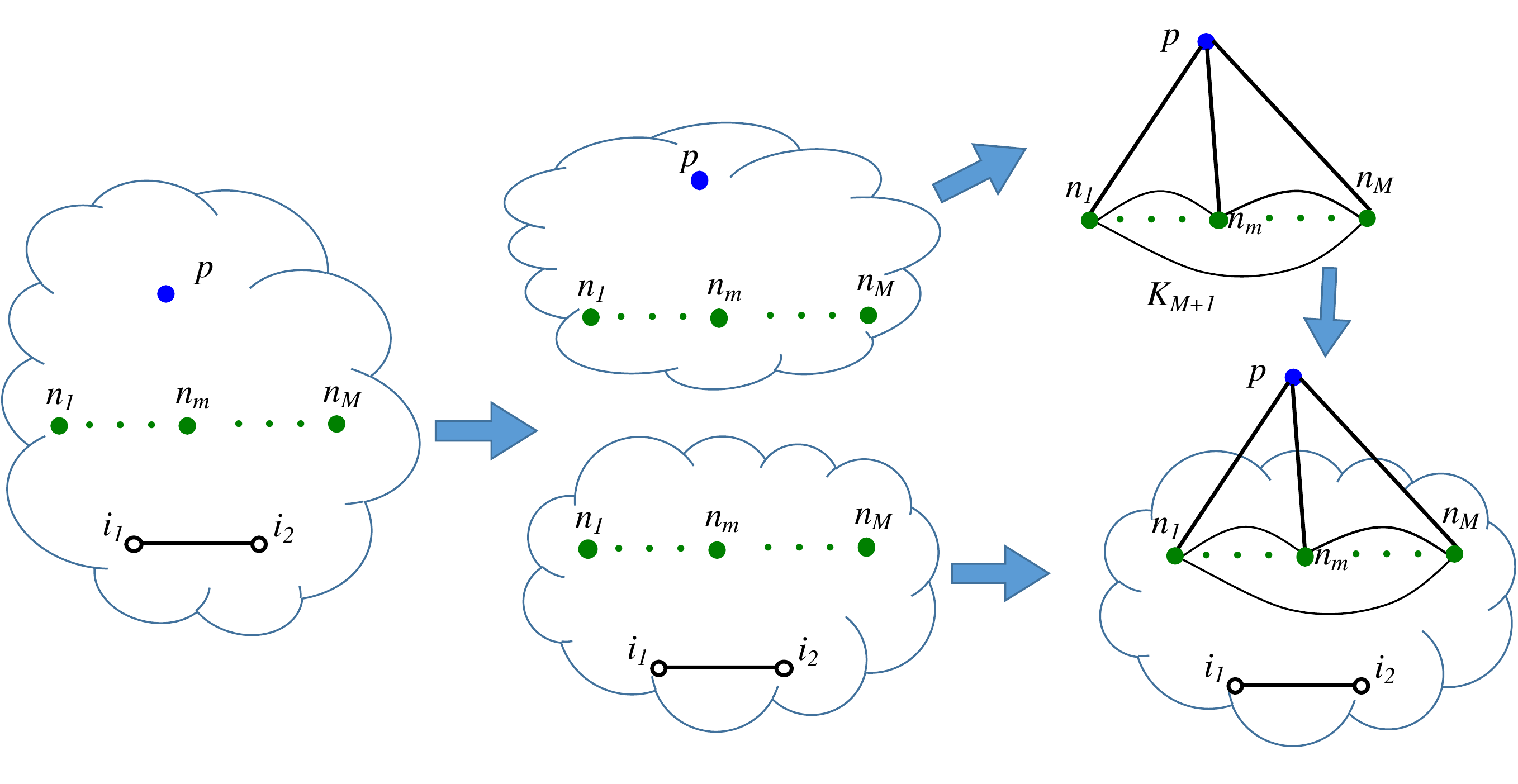}
	\caption{The network simplification process.} \label{demo}
\end{figure}

Suppose  the reference direction of the power flowing through the line connecting bus $p$ and  $n_m$ ($m=1,...,M$) in the equivalent network (shown at the bottom right corner of Fig. \ref{demo}) is defined as from bus $p$ to bus $n_m$. Then by Lemma \ref{lemmashift}, we know that the change of power flowing through the line connecting bus $p$ and  $n_m$ ($m=1,...,M$) in the equivalent network due to one unit increase in nodal injections at bus $i_2$ and one unit decrease in nodal injections at bus $i_1$ is of the same sign as that of the shift factor difference $\Psi_{i,p}-\Psi_{i,n_m}$.

Finally, we denote the power flowing through the lines connecting node $p$ and node $n_m$ ($m=1,...,M$) as $\{I_{p,n_1},...,I_{p,n_M}\}$  due to one unit  nodal injection  at bus $i_2$ and one unit   nodal withdrawal at bus $i_1$. By Kirchhoff's current law, we must have \[\sum_{m=1}^{M}I_{p,n_m}=0\]
 meaning \[\sum_{m=1}^{M}c_m(\Psi_{i,p}-\Psi_{i,n_m})=0\]
 where $I_{p,n_m}=c_m \Psi_{i,n_m}$ and $c_m$ are some positive coefficient by Lemma \ref{lemmashift}. Thus $\{\Psi_{i,p}-\Psi_{i,n_1},..., \Psi_{i,p}-\Psi_{i,n_M}\}$ cannot be all positive or all negative. 
\end{proof}

Rather than limiting the discussion to single line congestion, we'd like to know if phenomena similar to Proposition \ref{SFVCS} can be found for scenarios with multiple congestions. Things become a little  complicated here as the changes of line flows of the congested lines due to the nodal injection regulations may not  be all favorable, i.e. the alleviation of  some congestions may come inevitably at the expense of worsening other congestions, requiring all congestions being assigned proper weights to evaluate the overall benefit of the nodal injection regulations. Since our main consideration is to reduce generation cost,  the shadow prices of  line constraints are naturally  used as the weights in the following discussion.

We will show that for multiple given congestions in one pseudo biconnected component, the optimal set of buses in another pseudo biconnected component must also be a subset of the vertex cut sets.   The complete proof involves  one proposition and two corollaries.
\begin{proposition}\label{vcs}
Suppose we use a  vertex cut set to decompose a connected transmission network into $n$ ($n\geq 2$)  sub-graphs. Assume that we can arbitrarily change the power injections at   buses in  the $j$-th sub-graph to alter the line flows  in the $i$-th sub-graph ($i\neq j$),  then such impact on the  $i$-th sub-graph can  be precisely replicated/canceled by  regulating the power injections of the buses in the vertex cut set.
\end{proposition}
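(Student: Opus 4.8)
The plan is to translate the statement into the circuit language of Table~1 --- current injections in a purely resistive network --- and then to exploit the defining feature of a vertex cut set: once its vertices and their incident edges are deleted, the $j$-th sub-graph falls apart from the rest. Since the DC power-flow model is linear it suffices to track a single balanced injection perturbation $\Delta I$. Write $A$ for the set of buses belonging \emph{exclusively} to the $j$-th sub-graph (equivalently, to the $j$-th pre pseudo biconnected component), and let $C$ be the vertex-cut-set vertices adjacent to $A$, so that $C$ is contained in the given vertex cut set and no bus of $A$ is adjacent to any bus lying outside $A\cup C$. Every edge of the $i$-th sub-graph lies in $V\setminus A$, because neither the interior buses nor the boundary (cut) vertices of the $i$-th sub-graph can be exclusive to the $j$-th sub-graph. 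The perturbation $\Delta I$ allowed by the proposition is supported on the buses of the $j$-th sub-graph, i.e. on $A$ together with some vertex-cut-set buses, and I want to produce a balanced perturbation $\Delta I'$ supported entirely on the vertex cut set that induces exactly the same flow on every line outside $A$ --- in particular on every line of the $i$-th sub-graph. Reversing the sign of $\Delta I'$ then cancels the effect.

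The engine of the argument is a partial Kron reduction that eliminates precisely the buses of $A$. This is well defined because $G$ is connected (so the principal submatrix of the weighted Laplacian indexed by $A$ is nonsingular), and by the electrical-equivalence property of Kron reduction \cite{Dorfler} it preserves the nodal potentials at all retained buses when the reduced network is driven by the correspondingly reduced injection; moreover, because no bus of $A$ is adjacent to anything outside $A\cup C$, eliminating $A$ creates no new edge and modifies no conductance among the retained buses except inside $C$, so the flow on every retained line is literally the same function of the retained potentials as before. First I would do the bookkeeping that carries $\Delta I$ to the reduced network: the Schur-complement formula redistributes the part of $\Delta I$ supported on $A$ onto $C$, and leaves the part of $\Delta I$ already on the vertex cut set in place, so the reduced perturbation $\Delta I'$ is supported on the vertex cut set; the column-sum-zero property of the Laplacian, together with the absence of edges from $A$ to the outside, shows $\Delta I'$ is again balanced.

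With this in hand the proof closes by chaining two instances of the equivalence property. On one side, the reduced network driven by $\Delta I'$ reproduces, at every retained bus, the same potentials --- hence the same flows on every retained line --- as the original network driven by $\Delta I$, since $\Delta I'$ is by construction the reduced version of $\Delta I$. On the other side, feeding $\Delta I'$ (which has no component on $A$) directly into the \emph{original} network and then eliminating $A$ leaves $\Delta I'$ untouched, so the original network driven by $\Delta I'$ agrees, at the retained buses, with the reduced network driven by $\Delta I'$. Composing these two identities shows that the original network responds identically to $\Delta I$ and to $\Delta I'$ on every line of the $i$-th sub-graph, and since $\Delta I'$ lives on the vertex cut set, the proposition follows.

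I expect the genuinely delicate points to be bookkeeping rather than conceptual: checking that the partial Kron reduction is admissible (nonsingularity of the $A$-block, which is exactly where connectivity of $G$ is used), that the redistributed injection $\Delta I'$ is itself a legitimate balanced pattern, and that the new equivalent edges created by the reduction never spill outside $C$ --- all three following from the fact that $A$ is shielded from the rest of the network by $C$. The underlying intuition is the circuit-theoretic one already behind Proposition~\ref{SFVCS}, namely that nothing happening in the $j$-th sub-graph reaches the $i$-th sub-graph except through the finitely many cut vertices, so the entire effect is encoded in the currents pushed through those vertices; the only new element here is that the statement is made uniformly for all lines of the $i$-th sub-graph at once, which is why the potential-preserving (rather than merely sign-preserving) form of Kron reduction is the right tool.
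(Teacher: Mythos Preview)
Your argument is correct, and it takes a genuinely different route from the paper's own proof. The paper argues from the \emph{receiving} side: by superposition it reduces to showing that, whenever the interior buses of the $i$-th sub-graph carry zero net injection, the vertex-cut-set injections can be chosen so that all $i$-side line flows vanish. It does this by partitioning the buses into (interior of $i$)/(cut set)/(rest), observing that the corresponding principal $m\times m$ block of $B^{-1}$ is positive definite and hence invertible, and concluding that one can drive all cut-set voltage angles to a common value; a short contradiction then forces every $i$-side flow to zero. Your proof instead works from the \emph{sending} side: you Kron-reduce away the interior $A$ of the $j$-th sub-graph, which by the Schur-complement formula pushes the $A$-supported part of $\Delta I$ onto the boundary $C\subseteq$ (cut set) and leaves everything outside $A\cup C$ untouched, and then you invoke the potential-preserving property of Kron reduction twice to conclude that $\Delta I$ and the redistributed $\Delta I'$ induce identical potentials (hence identical flows) on every retained bus and line. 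Your approach is more constructive --- it actually exhibits the replicating cut-set injection as $\Delta I'_C=\Delta I_C-L_{CA}L_{AA}^{-1}\Delta I_A$ --- and it makes the ``shielding'' role of the cut set structurally transparent, at the cost of leaning on the cited Kron-reduction machinery. The paper's argument is less explicit but self-contained, needing only the positive-definiteness of $B^{-1}$ and a one-line potential argument. Both hinge on the same invertibility fact (connectivity of $G$ makes the relevant principal submatrix nonsingular), just applied on opposite sides of the cut.
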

\begin{proof}
By  the superposition principle in the DC power flow model, we just need to  prove that  all line flows in the $i$-th sub-graph can always be set to zero  if the net power injections at all interior buses of the $i$-th sub-graph are strictly zero and if we can arbitrarily regulate the power injections at the buses in the vertex cut set.  

We first show that the reduced nodal susceptance matrix $B$ of any transmission network is positive definite. Since any resistance network is a passive system that consumes energy, and the absorbed power is given by 
\begin{equation}
P=v^{T}Gv
\end{equation}
where $v$ is the nodal voltage vector and $G$ is the reduced conductance matrix of the resistance network. It follows that the energy consumption will always be positive only if all eigenvalues of $G$ are positive. Thus, a criterion for passivity is that $G$ be positive definite. Since  the DC model of power flow is equivalent to a current driven network as shown in Table 1, we know the reduced  susceptance matrix $B$ in the DC power flow model is  positive definite, meaning its inverse matrix $B^{-1}$ is also positive definite. 

Suppose, in the bus vector, the first $l$ buses are owned exclusively by the $i$-th sub-graph, the middle $m$ buses are owned by the vertex cut set, and the last $n+1$ buses containing the reference bus are owned exclusively by other sub-graphs,  respectively, we then have the DC power flow equation
\begin{equation}
\theta=B^{-1}p
\end{equation}
where $p$ is the net nodal power injection vector, and $\theta$ is the nodal voltage angel vector. Based on the classification of the buses, the reduced voltage angle vector $\theta$ can be decomposed into 3 sub-vectors
\[\theta^T=[\theta_l^T\ \  \theta_m^T\ \ \theta_n^T],\]
and the positive definite matrix $B^{-1}$ can be decomposed into 9 sub-matrices 
\begin{equation}
\begin{split}
&\begin{matrix}
\quad\ \ \ \ \ \  \ \ \ in &\quad\ at &\quad\  \ out
\end{matrix}\\&
\begin{matrix}
\quad\quad\quad\ \ \ \ \downarrow &\quad\ \ \downarrow &\quad\ \ \ \ \downarrow
\end{matrix}\\&
B^{-1}=\begin{bmatrix}
A_{l\times l} &  B_{l\times m} & C_{l\times n} \\ 
D_{m\times l}  & E_{m\times m}  & F_{m\times n} \\ 
G_{n\times l}  & H_{n\times m}  & I_{n\times n} 
\end{bmatrix}
\begin{matrix}
\leftarrow \\ 
\leftarrow \\ 
\leftarrow 
\end{matrix}
\begin{matrix}
in \\ 
at \\ 
out 
\end{matrix}
\end{split}
\end{equation}
and we know the sub-matrix $E_{m\times m}$ must have full rank and be positive definite.  Thus we have
\[\theta_m=\begin{bmatrix}
D_{m\times l}  & E_{m\times m}  & F_{m\times n}
\end{bmatrix}p.\]
 
It follows that we can always set the  voltage angles    of the vertex cut set $\theta_m$ to be equal  if we can arbitrarily regulate the power injections of the buses in the vertex cut set. 

 Next we are going to prove that the   line flows in the $i$-th sub-graph must be all zero if we set  the  voltage angles  $\theta_m$   of the vertex cut set to be equal and  if the net power injections at the interior buses of the $i$-th sub-graph are all zero.
If under such scenario there still exist some non-zero line flows in the $i$-th sub-graph,  then we  know that the voltage angles of the interior buses of the $i$-th sub-graph can not be equal, meaning at least one of the interior buses is of non-zero net power injection. This  contradicts our assumption. 
\end{proof}
\begin{corollary}
Suppose we use a  vertex cut set to decompose a connected transmission network into $n$ ($n\geq 2$) connected sub-graphs. We first classify the columns of  the injection shift factor matrix into 3 sets with the first set being associated with the buses owned exclusively by the $i$-th sub-graph, and the second set being associated with the buses in the vertex cut set, and the third set  being associated with the buses owned exclusively by other sub-graphs. Assume their cardinalities are $l$, $m$, and $n$, respectively. We then  classify the rows of  the injection shift factor matrix into 2 sets with the first set being associated with the lines owned exclusively by the $i$-th sub-graph, and the second set being associated with the lines owned   exclusively by other sub-graphs, and assume their cardinalities are $x$, and $y$, respectively.  A visualization of the decomposition is shown below.
\begin{equation}
\begin{split}
&\begin{matrix}
\quad\ \ \ \ \ \  \ \ \ in &\quad\ at &\quad\  \ out
\end{matrix}\\&
\begin{matrix}
\quad\quad\quad\ \ \ \ \downarrow &\quad\ \ \downarrow &\quad\ \ \ \ \downarrow
\end{matrix}\\&
\Psi=\begin{bmatrix}
A_{x\times l} &  B_{x\times m} & C_{x\times n} \\ 
D_{y\times l}  & E_{y\times m}  & F_{y\times n} 
\end{bmatrix}
\begin{matrix}
\leftarrow \\ 
\leftarrow 
\end{matrix}
\begin{matrix}
in \\ 
out 
\end{matrix}
\end{split}
\end{equation}
Then the ranks of  the sub-matrix 
\[\begin{bmatrix}
B_{x\times m} & C_{x\times n}
\end{bmatrix}\]
and 
\[\begin{bmatrix}
D_{y\times l}  & E_{y\times m}
\end{bmatrix}\]
must both be $\leq$ $m$.\end{corollary}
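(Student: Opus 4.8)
The plan is to reduce the corollary to two ``span'' claims: every column of $C_{x\times n}$ should lie in the column space of $B_{x\times m}$, and every column of $D_{y\times l}$ should lie in the column space of $E_{y\times m}$. Since $B_{x\times m}$ and $E_{y\times m}$ have only $m$ columns apiece, these give $\operatorname{rank}[\,B_{x\times m}\ \ C_{x\times n}\,]=\operatorname{rank}B_{x\times m}\le m$ and $\operatorname{rank}[\,D_{y\times l}\ \ E_{y\times m}\,]=\operatorname{rank}E_{y\times m}\le m$, which is the assertion. As in the proof of Proposition \ref{vcs} I would take the reference bus $r$ inside one of the sub-graphs other than the $i$-th, so that $r$ sits in the ``out'' group and every column of $\Psi$ displayed in the statement is a genuine injection shift factor with respect to $r$. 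Write $G_i$ for the $i$-th sub-graph, $\bar G_i$ for the union of the remaining sub-graphs, and $N=\{n_1,\dots,n_m\}$ for the vertex cut set, which is exactly the set of buses shared by $G_i$ and $\bar G_i$.

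For the first claim, fix a bus $p$ owned exclusively by $\bar G_i$. Its column in $C_{x\times n}$ records the flow changes on the $G_i$-exclusive lines produced by injecting one unit at $p$ and withdrawing one unit at $r$, and both $p$ and $r$ lie outside $G_i$. By Proposition \ref{vcs} (applied to the sub-graphs containing $p$ and $r$, together with superposition) this impact on the flows inside $G_i$ coincides with the impact of some injection $q=(q_1,\dots,q_m)$ placed on the cut buses $N$; taking that $q$ balanced at $r$, its effect on the lines of $G_i$ is $\sum_{k=1}^{m}q_k\,(B_{x\times m})_{\cdot,k}$ on the $G_i$-exclusive rows. Hence the $p$-column of $C_{x\times n}$ is a linear combination of the columns of $B_{x\times m}$, and the first inequality follows.

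For the second claim, fix a bus $a$ owned exclusively by $G_i$; its column in $D_{y\times l}$ records the flow changes on the lines owned exclusively by the other sub-graphs --- a subset of the lines of $\bar G_i$ --- caused by injecting one unit at $a$ and withdrawing one unit at $r$. The vertex cut set $N$ separates $\bar G_i$ from $G_i$ exactly as it separates $G_i$ from $\bar G_i$, so the argument behind Proposition \ref{vcs} applies with the two sides interchanged and produces a cut-bus injection $q$, balanced at $r$, whose effect on the flows inside $\bar G_i$ matches that of the perturbation. The only extra point is that $r$ is now an interior bus of $\bar G_i$, so the replicating $q$ must additionally satisfy $\sum_{k}q_k=1$ to cancel the withdrawal at $r$; this single equation together with the $m-1$ equations equalizing the cut-bus phase angles is solvable because the ``cut-to-cut'' diagonal block of $B^{-1}$ is positive definite, hence invertible. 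The $a$-column of $D_{y\times l}$ then equals $\sum_{k=1}^{m}q_k\,(E_{y\times m})_{\cdot,k}$, so it lies in the column space of $E_{y\times m}$, and the second inequality follows.

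The superposition and the rank arithmetic are routine; the step I expect to be the crux --- and the reason the argument really rests on Proposition \ref{vcs} rather than on bookkeeping --- is the claim that a \emph{single} cut-bus injection reproduces the perturbation's effect on \emph{all} lines of $G_i$ (respectively $\bar G_i$) at once, not merely one line at a time. This is exactly what Proposition \ref{vcs} delivers: since $G_i$ communicates with the rest of the network only through $N$, its entire internal state is determined by the interior injections of $G_i$ and the phases at $N$, so matching those phases (and, in the second case, the net flow into $N$) matches every internal flow simultaneously. Equivalently one may Kron-reduce $\bar G_i$ onto $N\cup\{p\}$ (respectively $G_i$ onto $N\cup\{a\}$), exactly as in the proof of Proposition \ref{SFVCS}, and read off the port currents at the cut buses; I would mention this circuit-theoretic shortcut but carry the write-up through Proposition \ref{vcs} to keep things self-contained.
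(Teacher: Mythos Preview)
Your proposal is correct and follows essentially the same route as the paper: invoke Proposition~\ref{vcs} to show that each column of $C_{x\times n}$ (respectively $D_{y\times l}$) is a linear combination of the columns of $B_{x\times m}$ (respectively $E_{y\times m}$), whence both displayed blocks have rank at most $m$. The paper's proof is terser---it dispatches the second inequality with a bare ``similarly''---whereas you take extra care with the placement of the reference bus and the balance constraint $\sum_k q_k=1$ when the roles of $G_i$ and $\bar G_i$ are swapped; this added rigor is welcome but does not change the underlying argument.
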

\begin{proof}
By Proposition \ref{vcs}, we know that the  effect of one unit net power injection  at a bus in the $j$-th sub-graph on lines in the $i$-th sub-graph is a linear combination of the effects of one unit net power injections  at the buses in the vertex cut set on lines in the $i$-th sub-graph. Thus we know each column of the sub-matrix $[C_{x\times n}]$ is a linear combination of the columns of the sub-matrix $[B_{x\times m}]$, meaning the rank  the sub-matrix 
$[
B_{x\times m} \ C_{x\times n}
]$ is $\leq$ m. Similarly, we can prove that each column of the sub-matrix $[D_{y\times l}]$ is a linear combination of the columns of the sub-matrix $[E_{y\times m}]$,  meaning the rank  the sub-matrix 
$[
D_{y\times l}  \ E_{y\times m}
]$ is $\leq$ m .
\end{proof}
\begin{corollary}\label{corr_lmp}
Suppose we use a  vertex cut set to decompose a connected transmission network into $n$ ($n\geq 2$)  sub-graphs and all congested lines are contained in the $i$-th sub-graph (which may be unconnected),  then one pair of buses with maximum LMP difference in the $j$-th sub-graph  ($i\neq j$) must be in the vertex cut set.
\end{corollary}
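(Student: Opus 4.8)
The plan is to prove something slightly stronger than the stated conclusion: that for every bus $p$ in the $j$-th sub-graph the LMP $\lambda_p$ is a \emph{convex} combination of the LMPs $\lambda_{n_1},\dots,\lambda_{n_M}$ at the buses $n_1,\dots,n_M$ of the vertex cut set. Granting this, the corollary is immediate, since then $\min_m\lambda_{n_m}\le\lambda_p\le\max_m\lambda_{n_m}$ for every $p$ in the $j$-th sub-graph, so a pair $\{n_a,n_b\}$ with $\lambda_{n_a}=\max_m\lambda_{n_m}$ and $\lambda_{n_b}=\min_m\lambda_{n_m}$ realizes the maximum LMP difference over the whole $j$-th sub-graph and lies entirely in the vertex cut set.

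The first step, which is also the crux, is to upgrade the linear-combination fact behind Corollary~1 to an \emph{affine and nonnegative} one. Fix a bus $p$ owned exclusively by the $j$-th sub-graph and repeat the network-simplification construction from the proof of Proposition~\ref{SFVCS}: Kron-reduce the $j$-th sub-graph down to its boundary buses together with $p$, obtaining a complete graph $K_{M+1}$ on $\{n_1,\dots,n_M,p\}$ (invoking the Remark after Theorem~1, i.e. adjoining trivial infinite-resistance lines, if the $i$-th sub-graph happens to be disconnected), and glue this $K_{M+1}$ onto the untouched ``nearby sub-network'' that still contains every congested line. By Theorem~1 each edge $p$--$n_m$ of $K_{M+1}$ carries a strictly positive conductance $c_m$, and in the glued network $p$ has no neighbours other than $n_1,\dots,n_M$ and no injection of its own when the probing transfer is the unit injection/withdrawal at the two endpoints of a congested line $k$. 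Kirchhoff's current law at $p$ then yields, exactly as in the proof of Proposition~\ref{SFVCS}, $\sum_{m=1}^{M}c_m(\Psi_{k,p}-\Psi_{k,n_m})=0$, i.e. $\Psi_{k,p}=\sum_m w_m\Psi_{k,n_m}$ with $w_m=c_m/\sum_{m'}c_{m'}>0$ and $\sum_m w_m=1$. The decisive observation is that the $c_m$, being conductances of the Kron-reduced $j$-th sub-graph, do not depend on which line is congested, so the \emph{same} convex weights $w_m=w_m(p)$ serve simultaneously for every congested line $k$.

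The second step feeds this into the LMP formula~\eqref{lmp_formula}. Because $\mu_k=0$ on every uncongested line and all congested lines lie in the $i$-th sub-graph,
\[
\lambda_p=\lambda_{ref}-\sum_k\mu_k\Psi_{k,p}=\lambda_{ref}-\sum_m w_m\Big(\sum_k\mu_k\Psi_{k,n_m}\Big)=\lambda_{ref}-\sum_m w_m(\lambda_{ref}-\lambda_{n_m})=\sum_m w_m\lambda_{n_m},
\]
where the last equality uses $\sum_m w_m=1$; note that the reference-bus term cancels, as it must, since LMPs are reference-independent. This is precisely the convex-combination claim, and since the cut-set buses themselves belong to the $j$-th sub-graph the extreme LMPs over the $j$-th sub-graph are attained on the vertex cut set, giving the corollary.

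I expect the only genuine obstacle to be this first step: Corollary~1 by itself delivers only a linear combination with arbitrary real coefficients, which could in principle place $\lambda_p$ outside $[\min_m\lambda_{n_m},\max_m\lambda_{n_m}]$; it is the positivity of the Kron-reduced conductances (Theorem~1) together with Kirchhoff's current law at the interior bus $p$ that pins the coefficients down to be nonnegative with unit sum. The remaining points --- reference-bus independence (already noted in the proof of Proposition~\ref{SFVCS}) and the disconnected-$i$-th-sub-graph case (handled by the Remark following Theorem~1) --- are routine.
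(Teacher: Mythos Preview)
Your proof is correct and reaches the same target the paper does --- that $\lambda_p$ is a convex combination of the cut-set LMPs --- but by a genuinely different mechanism. The paper first invokes Proposition~\ref{vcs} to obtain an \emph{affine} combination $\lambda_p=\sum_m a_m\lambda_{n_m}$ with $\sum_m a_m=1$, and then spends a separate paragraph proving $a_m\ge 0$ by contradiction: if some $a_m<0$, the cut-set phase angles under the cancelling injection pattern would be non-constant, forcing strictly positive absorbed power $[\theta_i^{T}\,\theta_m^{T}]B_i[\theta_i^{T}\,\theta_m^{T}]^T$ (hence nonzero flows) in the $i$-th sub-graph, contradicting cancellation. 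You instead stay inside the Kron-reduction framework of Proposition~\ref{SFVCS} and read nonnegativity off directly from the fact that the weights are normalized conductances; your decisive extra observation --- not needed in Proposition~\ref{SFVCS}, which only requires positivity of the coefficients for a \emph{single} congested line --- is that these conductances depend only on the $j$-th sub-graph and are therefore the same for every congested line $k$, which is exactly what lets the convex combination survive the weighted sum in the LMP formula. Your route is shorter and sidesteps the positive-definiteness detour; the paper's route has the virtue of making the replication/cancellation picture of Proposition~\ref{vcs} do the structural work.

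One point worth tightening: KCL at $p$ literally yields $\sum_m c_m(\theta_p-\theta_{n_m})=0$ in phase angles, and the passage to $\sum_m c_m(\Psi_{k,p}-\Psi_{k,n_m})=0$ uses reciprocity (under the unit transfer across line $k$, one has $\theta_x=\Psi_{k,x}/b_k$ for every bus $x$, by symmetry of $B^{-1}$). You defer this to ``exactly as in the proof of Proposition~\ref{SFVCS},'' but that proof, via Lemma~\ref{lemmashift}, only asserts that the ratio $I_{p,n_m}/(\Psi_{k,p}-\Psi_{k,n_m})$ is positive, not that it equals the conductance $c_m$; a one-line reciprocity remark would make the identification of $w_m$ with $c_m/\sum_{m'}c_{m'}$ --- and hence its $k$-independence --- airtight. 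Also, ``strictly positive'' is stronger than needed and not guaranteed by Theorem~1; $c_m\ge 0$ with $\sum_m c_m>0$ suffices for the convex combination.
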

\begin{proof}
Here, we just need to show that the LMP at any bus in the $j$-th sub-graph is a convex combination of the LMPs at the buses in the vertex cut set.

Suppose there  are $m$ buses in the vertex cut set and their LMPs are $\{\lambda_1,...,\lambda_m\}$, and we randomly pick a bus, say bus $j1$ with LMP being $\lambda_{j1}$, from the $j$-th  sub-graph.
By Proposition \ref{vcs}, we know that the  effect of one unit net power injection  at a bus  $j1$ on lines in the $i$-th sub-graph is a linear combination of  the effect of   one unit net power injections  at the $m$ buses in the vertex cut set on lines in the $i$-th sub-graph, and we assume the coefficients of the linear combination are $\{a_1,...,a_m\}$, where the power balance equation ensures that  $a_1 +a_2 +...+a_m=1$.  Thus we have
\[ \vec{ \Psi}_{i,j1}   =[\vec{ \Psi}_{i,v_1},...,\vec{ \Psi}_{i,v_m}][a_1,...,a_m]^T,\]
where $\vec{ \Psi}_{i,j1}$ denotes the shift factor vector for read power at bus $j1$ on lines in the $i$-th sub-graph,  $\vec{ \Psi}_{i,v_k}$ ($k=1,...,m$) denotes the shift factor vector for read power at the $k$-th vertex cut set bus on lines in the $i$-th sub-graph.

 Since  we assume  all congestions are in the $i$-th sub-graph, we know the shadow prices for   line constraints that are not in the $i$-th sub-graph are all zero. Denoting the shadow price vector for all    line constraints   in the $i$-th sub-graph by $[\mu_i]$, we have
\begin{equation}
\begin{split}
\lambda_{j1} & =\lambda_{ref}-[\mu_i]^T \vec{ \Psi}_{i,j1} \\
&=\lambda_{ref}-[\mu_i]^T [\vec{ \Psi}_{i,v_1},...,\vec{ \Psi}_{i,v_m}][a_1,...,a_m]^T\\
&=a_1 \lambda_1+a_2\lambda_2+...+a_m\lambda_m
\end{split}
\end{equation}

Next we will show that  $a_i\geq0$ ($i=1,...,m$), i.e. to  cancel the effect of one unit net power injection  at  bus  $j1$ on lines in the $i$-th sub-graph  requires a non-negative  net power withdrawn at each bus in the vertex cut set.
Suppose we only inject and withdraw power at bus $j1$ (which has one unit net power injection) and the buses  in the vertex cut set, and their effects on the $i$-th sub-graph   precisely offset each other. Then the   bus with the lowest voltage angle must be one of the buses with positive net power withdrawn (or negative net power injection) and   its voltage angle must be strictly lower than any  bus with  positive net power injection. Apparently,  the bus with lowest voltage angle must be in the vertex cut set. If one of the buses in the vertex cut set has positive  net power injection, meaning its voltage angle is different from the lowest nodal voltage angle, then there exists a nodal voltage angle difference in the vertex cut set. In such case, the absorbed power in the $i$-th sub-graph \[P_i=[\theta_i^{T} \theta_m^{T}]B_i[\theta_i^{T} \theta_m^{T}]^T\] must be positive where $\theta_i$ is the voltage angle vector of all interior buses in the $i$-th sub-graph, $\theta_m$ is the voltage angle vector of all bus in the vertex cut set and $B_i$ is the reduced susceptance matrix of the $i$-th sub-graph. This means some line power flows in the $i$-th sub-graph must be non-zero, indicating a contradiction to the assumption that the  effects of power injections/withdrawals at bus $j1$ and at the buses in the vertex cut set on the $i$-th sub-graph   precisely offset each other.\end{proof}

\begin{remark}
Corollary  \ref{corr_lmp} points out  an useful property of  vertex cut set  in the trading of power or financial transmission rights (FTRs). Essentially,  the main task in power trading is to find those pairs of buses between which the LMP differences are significant. If a vertex cut set can decompose the  grid such that all congestions are isolated to one or several  sub-grids, then the range of the LMPs of buses in the vertex cut that form  the boundary  of the congested sub-grid(s) will dominate the LMP range of all buses in the non-congested sub-grids. Loosely speaking, the closer the vertex cut gets to the congestions, the higher trading values of the buses in the vertex cut set.  Or in other words, if the congested sub-grid(s)  defined by one given vertex cut set is a subset of the congested sub-grid(s) defined by another vertex cut set, then the buses in the given vertex cut must be of higher or at least equal trading value than that in the other one, suggesting an arbitrage opportunity if violates.
\end{remark}

\section{Grid Decomposition Heuristic}
This section presents a general algorithm structure for grid decomposition heuristics. The objective of the algorithm is   eliminating the portion(s) of  the grid with relatively small range of LMPs from  the consideration of topology control  with
little computational effort. The algorithm is specified in Fig.
 \ref{heu}. This simple algorithm structure first solves an OPF
and, given the OPF results, if the OPF is feasible, finds an  appropriate vertex cut set to decompose the grid into two parts: one with relatively high range of LMPs and the other with relatively low range of LMPs.

The initial vertex cut set that includes the ending buses of all congested lines forms the boundary of the smallest sub-grid containing all congestions. The algorithm then calculates the LMP absolute deviation for each bus in the initial vertex cut set and  locates   the bus, say bus $i$, with the largest LMP absolute deviation.  It then ``pushes out" the initial vertex cut from   bus $i$ by swapping out  bus $i$ (and therefore   bus $i$ becomes an interior bus of the congested sub-grid) and   bringing bus $i$'s neighboring buses into the vertex cut set. 

A pre-set LMP range threshold is the characterizing element  for the  stopping criteria which determines   the number of iterations
applied, and can include a number of conditions. The algorithm terminates if the range of the LMP  of the vertex cut set falls below the pre-set threshold or if all buses  have been brought into the congested sub-grid.   Additionally, the
algorithm may have a pre-set maximum number of iterations.

 \begin{figure}[htbp]
	\centering
		\includegraphics[width=0.3\textwidth]{./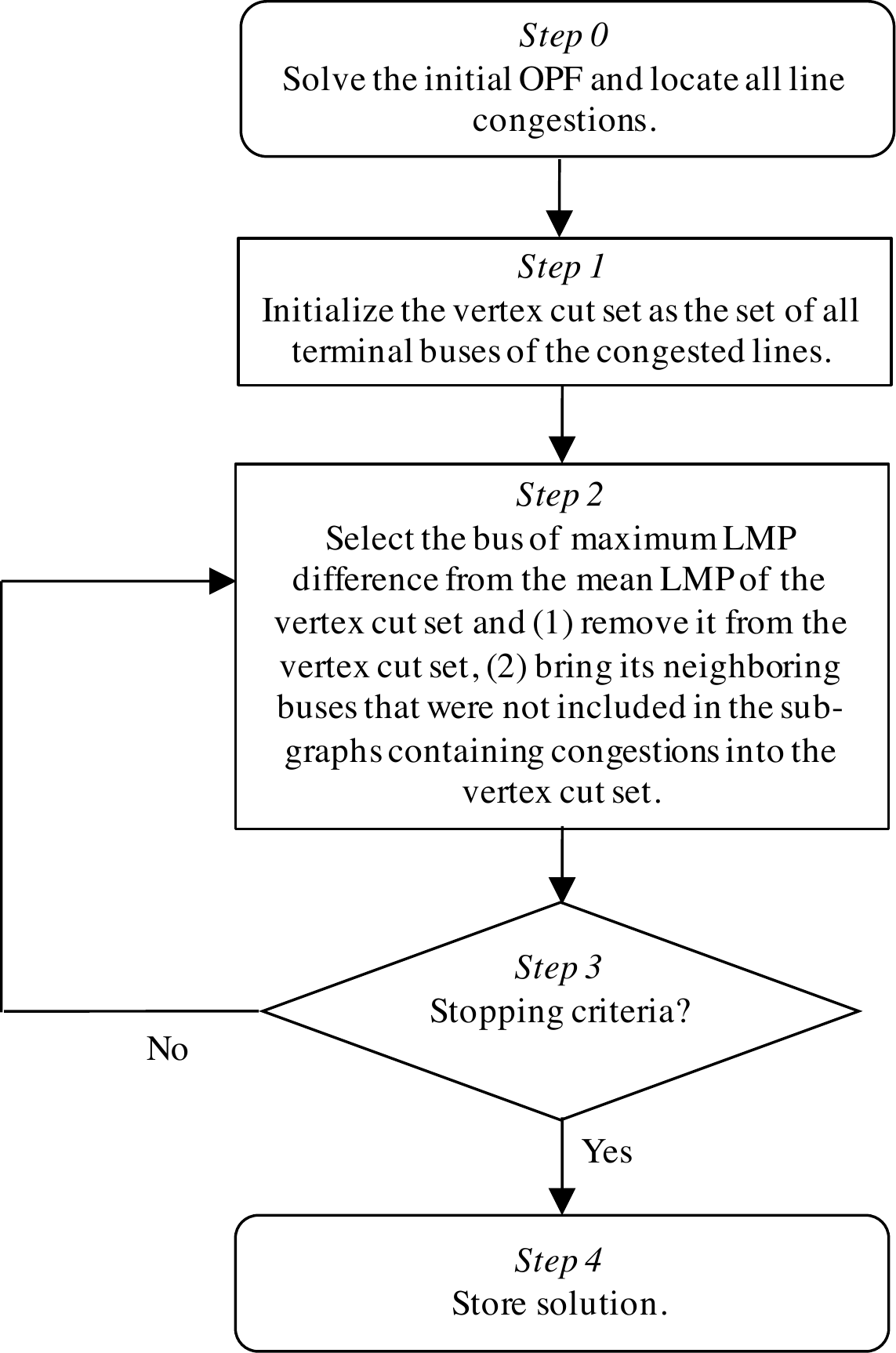}
	\caption{Flow chart describing general algorithm structure of grid decomposition based on vertex cut set.} \label{heu}
\end{figure}
\section{Simulation Results}
To test the validity of the grid decomposition based on vertex cut set, a new topology control  heuristic was proposed and tested on the IEEE 118-bus test systems.  The so called Local Greedy Heuristic  is the same as the Standard Greedy Heuristic in \cite{preprint} except that the switchable lines are limited to the neighboring area of the congested line, i.e the congested sub-grid obtained by the grid decomposition algorithm shown in Fig. \ref{heu}.

\subsection{Test Network Overview}
The heuristics were tested on the IEEE 118-bus test system. This test system represents a portion of the Midwestern US  Power System  as of December, 1962. The data of the test system employed is downloaded from the website of UW$^1$. The generator cost information used in the IEEE 118 network is extracted from  \cite{Blumsack}. The test system consists of 118 buses, 54 generators, and 194 lines, all of
which are assumed to be connected in the initial topology.

\subsection{Standard Greedy Heuristic Overview}
The algorithm is specified in Fig. \ref{standard}. The Standard Greedy Heuristic redoes  the DCOPFs for each  allowable line outage and then select the single line  outage that leads to maximum savings in each iteration.  Please refer \cite{preprint} for more details.

\begin{figure}[htbp]
	\centering
		\includegraphics[width=0.5 \textwidth]{./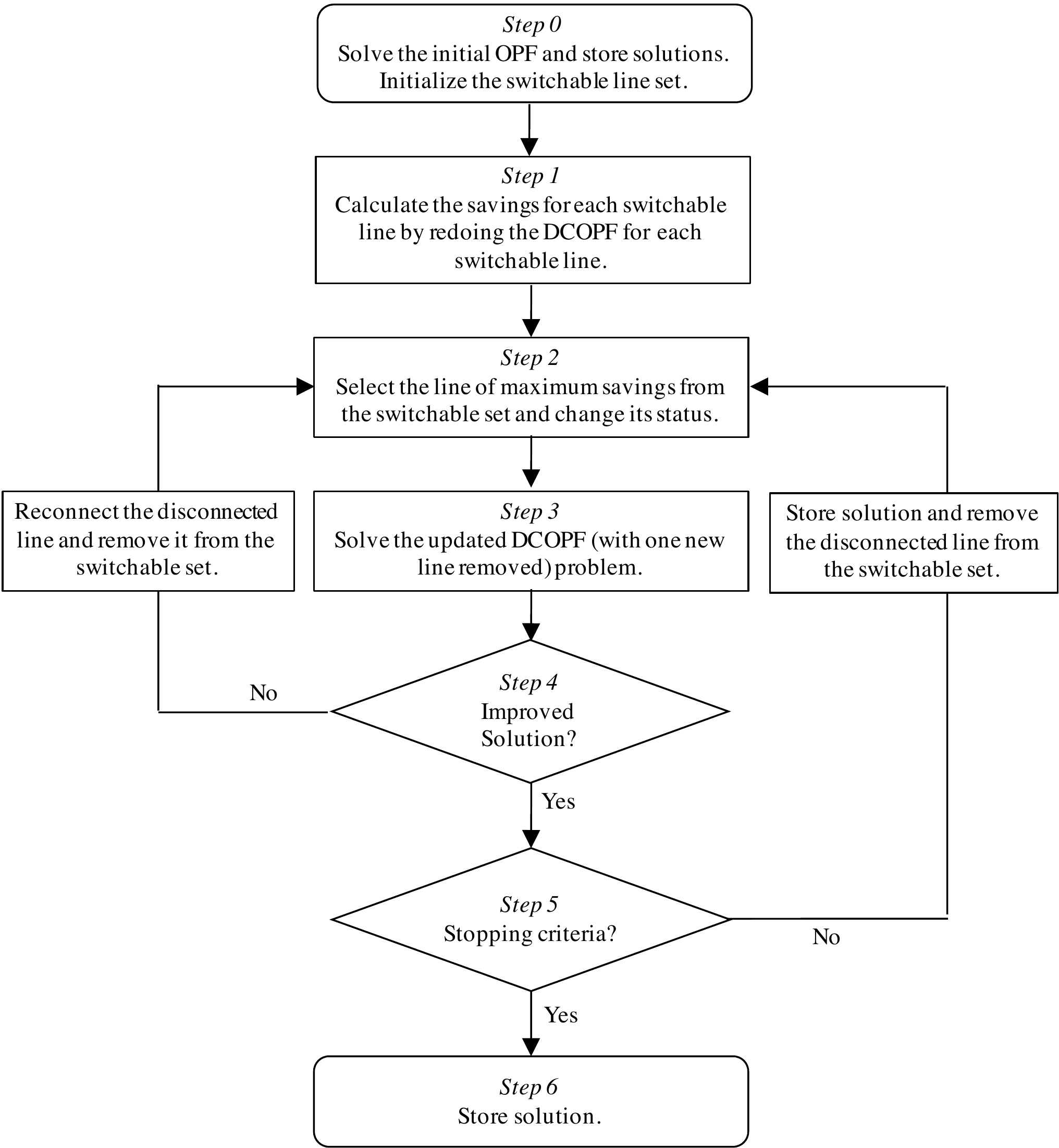}
	\caption{Flow chart describing general algorithm structure of the Standard Greedy Heuristic.} \label{standard}
\end{figure}

Clearly, optimality is not guaranteed, since the transmission topology  is not co-optimized with power dispatch. 
 
 \let\thefootnote\relax\footnotetext{\noindent\underbar{\hspace{0.8in}}\\
$^1$https:$//$www2.ee.washington.edu/research/pstca/pf118/pg$\_$tca118bus.htm
}
\subsection{Simulation Results and Analysis} \label{simlu}
To generate the test scenarios for the heuristics, a fixed load is maintained and we perform a Monte Carlo simulation where the generator costs are randomly varied.   
The sample size used for the Monte Carlo simulation is 50. 
For the Local Greedy Heuristic based on vertex cut set, the LMP range threshold   is set to be 10\% of the maximum LMP difference across the grid. 
%
%The 50 test scenarios in Section \ref{simlu} were reused in this section. The  heuristic was also implemented in Matlab, using  MATPOWER  as the DCOPF solver. The performance comparison among the  heuristics is details in Table 3.
%\textcolor{blue}{
%\begin{center}\label{compare}
%\begin{tabular}{l@{\hskip 0.1in}c@{\hskip 0.1in}c@{\hskip 0.1in}c}\hline
% heuristic & MAS  &   saving  &  lines disconnected   \\ \hline 
% Grid Decomposition &    &   &     \\  \hline
%Line Profit&     &   &     \\  \hline
%Standard Greedy&     &   &     \\  \hline
%\end{tabular}
%{Table 3: Performance comparison.}
%\end{center}}  

The percent savings rather than the dollar value are the focus of the paper.
Two benchmark cases are used to evaluate the savings of the heuristic: 1) the one with initial topology and line constraints whose DCOPF sets an upper bound of the total generation costs, and 2) the unconstrained case whose DCOPF provides a lower bound of the generation costs. The cost difference between the two benchmark cases is called the \emph{maximum attainable savings (MAS)}.

The two heuristics are implemented in Matlab, using  MATPOWER \cite{matpower} as the DCOPF solver. The performance comparison between the two heuristic is detailed in Table 2.

\begin{center}\label{compare}
\begin{tabular}{l@{\hskip 0.1in}c@{\hskip 0.05in}c@{\hskip 0.05in}c}\hline
\textbf{Heuristic} &   \textbf{Saving/MAS}  &  \textbf{Lines removed} & \textbf{Mean effort}  \\ \hline \hline 
 
  Local    &  0.653 $\pm$ 0.055  &   12.910 $\pm$ 4.312  & 0.216\\  \hline

Standard  & 0.672 $\pm$ 0.051     &   10.18 $\pm$ 3.98 & 1 \\  \hline

\end{tabular}
\vspace{-0.15in}
\begin{center}
\end{center}
{Table 2: Performance comparison.}
\end{center}

In the worst case, the two heuristics are   of
the same level computational complexity if the grid decomposition algorithm brings all buses into the congested sub-grid. In general, the Standard Greedy Heuristic costs
much more computational effort  especially for a large scale
networks. Though the computational costs of the two heuristics differ  by almost five times, their  attainable savings are quite close, showing the promise in applying grid decomposition to topology control.

\section{Conclusion}
  
In spite of the seemingly promising simulation  results of the heuristic approaches pursued by the research community and a certain level of proven predictability \cite{Baillieul,Wang,Wang_ar} of topology control, satisfactory grid-scale solutions are not guaranteed.  While optimality for this NP-hard problem is no obtained, our approach based on a general spatial grid decomposition nevertheless offers some hope that   it is possible to design algorithms that isolate congestion effects to a relatively small part of the network. By only focusing attention to the neighboring area of the congestions, the search space  of possible switching operations and the computational cost is significantly reduced, and   the locally  optimal switchings at each iteration are still retained to a large degree.  
\bibliography{references}
\bibliographystyle{IEEEtran}
 
\end{document}